\providecommand{\norm}[1]{\lVert#1\rVert}
\newtheorem{theorem}{Theorem}[section]
\title{Analysis-based sparse reconstruction with synthesis-based solvers}
\name{
Nicolae Cleju $^{\star}$ \qquad
Maria G. Jafari$^{\dagger}$ \qquad 
Mark D. Plumbley$^{\dagger}$ 
\thanks{This work was supported by: EURODOC ``Doctoral Scholarships for research
performance at European level" project, financed by the European Social Found and Romanian
Government (NC), EU Framework 7 FET-Open project FP7-ICT-225913-SMALL: Sparse Models, Algorithms and Learning for Large-Scale data (MGJ) and a Leadership Fellowship (EP/G007144/1) from the UK Engineering and Physical Sciences Research Council (EPSRC) (MDP) }}
\address{$^{\star}$ Technical University ``Gheorghe Asachi" of Iasi, Romania \\ Faculty of Electronics, Telecommunications and Information Technology \\
    $^{\dagger}$ Queen Mary University of London, Centre for Digital Music, London, UK }
\begin{document}
%
\maketitle
\begin{abstract}
Analysis based reconstruction has recently been introduced as an alternative to the well-known synthesis sparsity model used in a variety of signal processing areas. In this paper we convert the analysis exact-sparse reconstruction problem to an equivalent synthesis recovery problem with a set of additional constraints. We are therefore able to use existing synthesis-based algorithms for analysis-based exact-sparse recovery. We call this  the Analysis-By-Synthesis (ABS) approach. We evaluate our proposed approach by comparing it against the recent Greedy Analysis Pursuit (GAP) analysis-based recovery algorithm. The results show that our approach is a viable option for analysis-based reconstruction, while at the same time allowing many algorithms that have been developed for synthesis reconstruction to be directly applied for analysis reconstruction as well.
\end{abstract}
\begin{keywords}
Analysis sparsity, synthesis sparsity, sparse reconstruction, analysis by synthesis
\end{keywords}
\section{Introduction}
\label{sec:intro}

In recent years, sparse representation of signals has been an active research domain in signal processing. Until recently the usual sparsity model considered was a generative model, known as \emph{synthesis sparsity}: a signal $x \in \mathbb{R}^d$ is sparse if it can be expressed as a weighted sum of a few signals (called \emph{atoms}) from a known dictionary $D \in \mathbb{R}^{d \times N}$
\begin{equation}
\label{eq_smodel}
x = D \gamma_S, \textrm{ with } \norm{\gamma_S}_0 = k
\end{equation}
where $\norm{\cdot}_0$ represents the $\ell_0$ pseudo-norm, defined as the number of non-zero coefficients of a vector. The decomposition vector $\gamma_S$ is thus required to have $k$ non-zero elements.

Lately, a different sparsity model known as \emph{analysis sparsity} has been proposed \cite{Elad2007}, asserting that the signal $x$ produces a sparse output
\begin{equation}
\label{eq_amodel}
\gamma_A = \Omega x, \textrm{ with } \norm{\gamma_A}_0 = N - l
\end{equation}
when analyzed with an operator $\Omega \in R^{N \times d}$, where $l$ is the number of zero coefficients of $\gamma_A$ (see Section \ref{sec:analysis} for more details).

Both of these models can be successfully used as regularizing terms for ill-posed inverse problems. In this paper we focus on reconstructing a signal $x$ that is observed only through a set of $m < d$ linear measurements, arranged as the rows of an acquisition matrix $M \in \mathbb{R}^{m \times d}$, possibly affected by noise $e$
\begin{equation}
\label{eq_CSacq}
y = M x + e.
\end{equation}
This is known as the compressed sensing problem, which has been extensively studied \cite{CSStableSigRec, DecodingByLPCandes2005} and used in practice in various applications (e.g. \cite{Fira2011EUSPICO, Cleju2011ISSCS}). It is now well known \cite{CSStableSigRec} that a sufficiently sparse signal $x$ can be efficiently recovered from the measurements $y$ by solving the synthesis-based optimization problem
\begin{equation}
\label{eq_CSrec_s}
\hat{x} = D \arg\min_{\gamma_S} \| \gamma_S \|_0 \textrm{ with } \| y - M D \gamma_S\|_2^2 < \epsilon
\end{equation}
where $\epsilon$ is the estimated noise energy of the measurements. Interestingly, it has also been shown \cite{Nam2011techreport} that a sufficiently sparse $\gamma_A$ in \eqref{eq_amodel} also allows accurate recovery of the signal $x$ by solving
\begin{equation}
\label{eq_CSrec_a}
\hat{x} = \arg\min_{x} \| \Omega x \|_0 \textrm{ with } \| y - M x\|_2^2 < \epsilon.
\end{equation}
Problem \eqref{eq_CSrec_s} is NP-complete \cite{proofNPhard}, and we hypothesize that \eqref{eq_CSrec_a} may be similarly difficult. However, under stricter conditions, the $\ell_0$ norm in both equations can be replaced with the $\ell_1$ norm, leading to convex optimization problems that are much easier to solve \cite{Candes2011}.

In this paper we pursue an approach to solving the analysis-based reconstruction problem \eqref{eq_CSrec_a} by rewriting it as an equivalent synthesis reconstruction problem. The paper is structured as follows. Section \ref{sec:analysis} contains a closer look at the analysis model and its details. In Section \ref{sec:reform} we propose the \emph{Analysis-By-Synthesis} (ABS) scheme for exact-sparse analysis recovery. In Section \ref{sec:results} we compare our approach with the results obtained with the Greedy Analysis Pursuit algorithm \cite{Nam2011ICASSP} designed to solve \eqref{eq_CSrec_a} directly. Section \ref{sec:disc} contains further considerations regarding our scheme. Finally, concluding remarks and future work are presented in Section \ref{sec:concl}.

\section{The analysis model}
\label{sec:analysis}

As shown in \eqref{eq_smodel} and \eqref{eq_amodel}, the synthesis sparsity model requires that a signal is composed out of only $k$ atoms (columns) of the dictionary $D$, whereas the analysis model requires the signal to be orthogonal to a large number $l$ of the rows of the operator $\Omega$. We refer to $k$ as the \emph{sparsity} of the signal $x$ in the dictionary $D$, and, following \cite{Nam2011techreport}, to $l$ as the \emph{cosparsity} of $x$ with respect to the operator $\Omega$. 


For $N \le d$ the analysis and synthesis reconstruction problems are shown in \cite{Elad2007} to be equivalent, with $D$ and $\Omega$ being pseudo-inverses to each other, $D = \Omega ^ \dagger$. However, in general for $N > d$ (i.e. $\Omega$ is a ``tall" matrix) the equivalence no longer holds, with \eqref{eq_CSrec_s} and \eqref{eq_CSrec_a} leading to different solutions. 

The similarity of the two models emerges from the fact that both are instances of the \emph{Union-of-Subspaces} (UoS) model \cite{Nam2011techreport}. The set of all $k$-sparse signals in a dictionary $D$ comprises the union of all the $\binom{N}{k}$ $k$-dimensional subspaces spanned by any subset of $k$ atoms from the $N$ atoms of $D$. The set of all $l$-cosparse signals of an operator $\Omega$ is the union of all the $\binom{N}{l}$ $(d-l)$-dimensional subspaces that are the orthogonal complements of the subspaces spanned by any $l$ rows. We may say, therefore, that the synthesis model is essentially described by the subspaces where the signal may lie, i.e. the non-zero coefficients of the decomposition, whereas the analysis model describes the subspaces where the signal cannot lie, i.e. the rows that are orthogonal to the signal \cite{Nam2011techreport}. 

\section{Analysis-By-Synthesis (ABS) approach for exact recovery}
\label{sec:reform}

\subsection{Augmented equivalence theorem}

Let us consider the case of reconstruction with exact constraints, i.e. $\epsilon = 0$ in  \eqref{eq_CSrec_a}. The following theorem establishes the equivalence between the analysis recovery problem and a synthesis recovery problem with a set of extra constraints.

\begin{theorem}
\label{th_equiv}
The solution of the analysis recovery problem with exact constraints and full-rank operator $\Omega$
\begin{equation}
\label{eq_CSrec_a_exact}
\hat{x} = \arg\min_{x} \| \Omega x \|_0 \textrm{ with } y = M x
\end{equation}
is identical to the solution of the augmented synthesis recovery problem
\begin{equation}
\label{eq_CSrec_s_exact}
\hat{x} = D \arg\min_{\gamma} \| \gamma \|_0 \textrm{ with } \tilde{y} = \tilde{A} \gamma 
\end{equation}
where $D = \Omega^\dagger$, $\tilde{y} = \left[ \begin{array}{c} y \\ 0 \end{array} \right]$, $\tilde{A} = \left[ \begin{array}{c} M D \\ P_D \end{array} \right]$ with $P_D$ being any projector on the nullspace of $D$.
\end{theorem}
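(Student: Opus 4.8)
The plan is to build an explicit, cost‑preserving correspondence between the feasible set of the analysis problem~\eqref{eq_CSrec_a_exact} and that of the augmented synthesis problem~\eqref{eq_CSrec_s_exact}, given by the mutually inverse linear maps $x\mapsto\Omega x$ and $\gamma\mapsto D\gamma$. First I would record the elementary facts that power the argument. Since $\Omega$ is full rank with $N\ge d$, the pseudo-inverse $D=\Omega^\dagger$ is a left inverse, $D\Omega=I_d$; the product $\Omega D=\Omega\Omega^\dagger$ is the orthogonal projector onto $\operatorname{range}(\Omega)$; and $\ker D=\ker\Omega^\dagger=\operatorname{range}(\Omega)^\perp$, a subspace of dimension $N-d$. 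The key observation is that the added constraint $P_D\gamma=0$ is precisely the membership condition $\gamma\in\operatorname{range}(\Omega)$: if $P_D$ is the orthogonal projector onto $\ker D$ then $\ker P_D=(\ker D)^\perp=\operatorname{range}(\Omega)$, so $P_D\gamma=0\iff\gamma\in\operatorname{range}(\Omega)$. (More generally, all that is really used of $P_D$ is that it annihilates exactly $\operatorname{range}(\Omega)$, so any oblique projector onto $\ker D$ with that null space serves just as well.)

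Next I would prove the two inclusions. \emph{Analysis $\Rightarrow$ synthesis:} given $x$ with $y=Mx$, put $\gamma=\Omega x$; then $\gamma\in\operatorname{range}(\Omega)$ gives $P_D\gamma=0$, and $MD\gamma=MD\Omega x=Mx=y$, so $\tilde A\gamma=\tilde y$, i.e.\ $\gamma$ is feasible for~\eqref{eq_CSrec_s_exact} with $\|\gamma\|_0=\|\Omega x\|_0$. \emph{Synthesis $\Rightarrow$ analysis:} given $\gamma$ with $\tilde y=\tilde A\gamma$, unpack the block equation as $MD\gamma=y$ and $P_D\gamma=0$, and put $x=D\gamma$; the first equation yields $Mx=y$, while the second yields $\gamma\in\operatorname{range}(\Omega)$, hence $\Omega x=\Omega D\gamma=\gamma$ since $\Omega D$ fixes $\operatorname{range}(\Omega)$ pointwise. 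Thus $x$ is feasible for~\eqref{eq_CSrec_a_exact} with $\|\Omega x\|_0=\|\gamma\|_0$.

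Finally I would assemble the conclusion. Each inclusion shows that one problem's optimal value is no larger than the other's, so the two optimal values coincide; moreover, on the relevant feasible sets the maps $x\mapsto\Omega x$ and $\gamma\mapsto D\gamma$ are mutual inverses (by $D\Omega=I_d$ and $\Omega D\gamma=\gamma$ for $\gamma\in\operatorname{range}(\Omega)$) and preserve $\|\cdot\|_0$, so they restrict to a bijection between the two sets of minimizers. In particular, any minimizer $\hat\gamma$ of~\eqref{eq_CSrec_s_exact} yields a minimizer $D\hat\gamma$ of~\eqref{eq_CSrec_a_exact}, which is exactly the claimed identity $\hat x=D\arg\min_\gamma\|\gamma\|_0$. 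I expect the one genuinely delicate point to be the role of the constraint $P_D\gamma=0$: one must check that it cuts the synthesis feasible set down to exactly the image $\Omega\{x:Mx=y\}$, so that no vector $\gamma\notin\operatorname{range}(\Omega)$ with a smaller support can creep in and undercut the true analysis optimum — and it is here that the full-rank hypothesis on $\Omega$ (guaranteeing $D\Omega=I_d$ and the clean description of $\ker D$) and the choice of $P_D$ are essential.
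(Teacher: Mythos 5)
Your proof is correct and follows essentially the same route as the paper's: the change of variables $\gamma=\Omega x$, $x=D\gamma$ enabled by $\Omega^\dagger\Omega=I_d$, with the constraint $P_D\gamma=0$ encoding $\gamma\in\operatorname{range}(\Omega)$. You merely make explicit what the paper's substitution argument leaves implicit (the two-directional feasibility correspondence, cost preservation, and the precise requirement $\ker P_D=\operatorname{range}(\Omega)$), which is a welcome tightening but not a different method.
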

\begin{proof}
We show the equivalence of \eqref{eq_CSrec_a_exact} with \eqref{eq_CSrec_s_exact}, starting from the approach in \cite{Elad2007}. Making the notation $\Omega x = \gamma$, it follows from $\Omega^\dagger \Omega = I_d$ that $x = \Omega^\dagger \gamma$. We proceed to substitute the unknown variable $x$ in \eqref{eq_CSrec_a_exact} introducing $\gamma$ instead, but in doing that we must keep in mind that $\gamma$ is allowed to live only in the column span of $\Omega$, which we can express as the extra constraint $\gamma = \Omega \Omega^\dagger \gamma$. Therefore we arrive to
\begin{equation}
\label{eq_CSrec_a_reform1}
\hat{x} = \Omega^\dagger \arg\min_{\gamma: \gamma = \Omega \Omega^\dagger \gamma} \| \gamma \|_0 \textrm{ with } y = M \Omega^\dagger \gamma.
\end{equation}
We rewrite the constraint $\gamma = \Omega \Omega^\dagger \gamma$ as $0 = (I_N - \Omega \Omega^\dagger ) \gamma$. We can join this with the constraint $y = M \Omega^\dagger \gamma$ and construct a single augmented constraint system

\begin{equation}
\label{eq_extsys}
\underbrace{ \left[ \begin{array}{c} y \\ 0 \end{array} \right]}_{\tilde{y}} = \underbrace{ \left[ \begin{array}{c} M \Omega^\dagger \\ I_N - \Omega \Omega^\dagger \end{array} \right] }_{\tilde{A}} \gamma.
\end{equation}
Let us define $D = \Omega^\dagger$. The lower constraint $0 = (I_N - \Omega \Omega^\dagger) \gamma$ is equivalent to $\gamma$ living in the column space of $\Omega$, i.e. being orthogonal to the nullspace of $D = \Omega^\dagger$ (denoted as $n_D$); therefore this constraint can be expressed as $0 = P_D \gamma$ with $P_D$ being any projector on $n_D$.
Replacing $\Omega^\dagger$ with $D$ and rewriting \eqref{eq_CSrec_a_reform1} with the augmented constraint \eqref{eq_extsys} yields
\begin{equation}
\label{eq_CSrec_s_exact_thfinal}
\hat{x} = D \arg\min_{\gamma} \| \gamma \|_0 \textrm{ with } \left[ \begin{array}{c} 	y \\ 0 \end{array} \right] = \left[ \begin{array}{c}  M D \\	P_D \end{array} \right] \gamma 
\end{equation}
which is what we wanted to prove.
\end{proof}

Theorem \ref{th_equiv} reveals that analysis recovery is a particular instance of synthesis recovery; indeed, without the lower constraint \eqref{eq_CSrec_s_exact_thfinal} would be identical to synthesis-based recovery. What is specific of the analysis recovery is, therefore, the restriction of the solution search space to the column space of $\Omega$ (or, equivalently, to the row space of $D = \Omega^\dagger$). A similar condition is used in \cite{Yaghoobi2011} in the context of local optimality of analysis operator learning. In practice, this constraint can be expressed by finding a set of $(N - d)$ linearly independent vectors from $n_D$ (e.g. by finding a SVD decomposition of D) and then imposing that $\gamma$ is orthogonal to all of the vectors in this set. Moreover, if $D$ is a tight frame allowing fast multiplications via fast transform algorithms, the row vectors of $P_D$ can be selected as the ``missing'' orthogonal rows, thus allowing possible fast solver implementations. 

As a consequence of Theorem \ref{th_equiv}, one can use synthesis-based solvers to find the solution for analysis-based recovery. While the more general character of synthesis over analysis recovery, as well as the subspace restrictions implied by the latter, is already known \cite{Elad2007, Nam2011techreport}, to our knowledge this is the first time that the equivalence of analysis exact reconstruction with an augmented synthesis problem has been stated explicitly and also used as a method for analysis recovery.

One observes that whenever $N \le d$, $\Omega \Omega^\dagger = I_N$ and thus the lower subspace constraint in \eqref{eq_extsys} vanishes, straightforwardly confirming the equivalence of analysis-based and synthesis-based recovery already shown in \cite{Elad2007} for this case.

\subsection{Proposed approach}

Our proposed approach for analysis recovery with exact constraints ($\epsilon \approx 0$) is summarized in Algorithm \ref{algo_ABS}, which we denote as \emph{Analysis-By-Synthesis} (ABS). It consists of building the augmented constraint matrix $\tilde{A}$ and measurement vector $\tilde{y}$ and then solving with a synthesis-based algorithm.

\floatname{algorithm}{Algorithm}
\begin{algorithm}[t]
\caption{Proposed Analysis-By-Synthesis (ABS) approach for exact reconstruction}
\label{algo_ABS}
\begin{algorithmic}[1]
  \REQUIRE Analysis operator $\Omega$, measurements vector $y$, measurement matrix $M$\\
  \ENSURE  Recovered signal \[\displaystyle \hat{x} = \arg\min_{x} \| \Omega x \|_0 \textrm{ with } y = M x\]
  \STATE Define $D = \Omega^\dagger$ and compute a basis for the null space of $D$ using the $SVD$ decomposition, arranging the vectors as the rows of a $(N-d) \times N$ matrix denoted as $P_D$
  \STATE  Create augmented constraint matrix $\tilde{A}$ and measurement vector $\tilde{y}$ \\
  \[ \tilde{A} = \left[ \begin{array}{c} M D \\	P_D \end{array} \right] \;\;\;\;\;\;\;\;\;\;\;\;\;
     \tilde{y} = \left[ \begin{array}{c} y   \\ 0 \end{array} \right]
  \]
  \STATE Solve \[\displaystyle \hat{x} = D \arg\min_{\gamma} \| \gamma \|_0 \textrm{ with } \tilde{y} = \tilde{A} \gamma\] \\ using a synthesis-based solver.
\end{algorithmic}
\end{algorithm}

\section{Experimental results}
\label{sec:results}

\subsection{Setup}

A significant advantage of our approach is the ability to use existing $\ell_0$ or $\ell_1$ solvers designed for the synthesis reconstruction problem, in the third step of Algorithm \ref{algo_ABS}. We run four different synthesis-based solvers in the proposed ABS approach: Orthogonal Matching Pursuit (OMP) \cite{Pati1993} with stopping criterion being a fixed number $k$ of selected atoms (denoted as OMP-$k$), OMP with stopping criterion being residual error below $10^{-9}$ (OMP-$\epsilon$), Two Stage Thresholding (TST) \cite{Maleki2010} (a generalization of CoSaMP and subspace pursuit) and Basis Pursuit (BP) \cite{DecodingByLPCandes2005} for $\ell_1$ minimization from \cite{l1magic}. For reference we compare with the results obtained with the Greedy Analysis Pursuit (GAP) \cite{Nam2011ICASSP} algorithm, which is specifically designed for solving the analysis recovery problem directly.

We investigate the phase transition border \cite{Nam2011ICASSP} of the above mentioned algorithms for perfect recovery, for the case of exact reconstruction. The dimension of the signals is fixed to 200. The analysis operator is created as the transposition of a random tight frame, having $N = 240$ rows. We define the parameters $\delta = \frac{m}{d}$ and $\rho = \frac{d-l}{m}$ that define the compression ratio and the relative cosparsity. For every pair $(\delta, \rho)$ we generate 100 signals $x_i$ such that $\| \Omega x_i \|_0 = N-l$ and we project them using a random measurement matrix $M$ of size $m \times d$, with zero-mean unit-norm normal i.i.d. random elements. We then attempt reconstruction with the above mentioned algorithms. For OMP-$k$ we stop after $k = N-l$ atoms have been selected. We consider a signal as perfectly recovered if the reconstruction error is below $10^{-6}$. 

\subsection{Results}

Fig.\ref{fig:results} displays the percentage of perfectly recovered signals, with white indicating 100\% recoverability and black 0\%. The notation ABS indicates that the synthesis solvers are used within our proposed approach.

\begin{figure*}
  \centering
  \subfloat[ABS: OMP-k]
  {
		\includegraphics[scale=0.3]{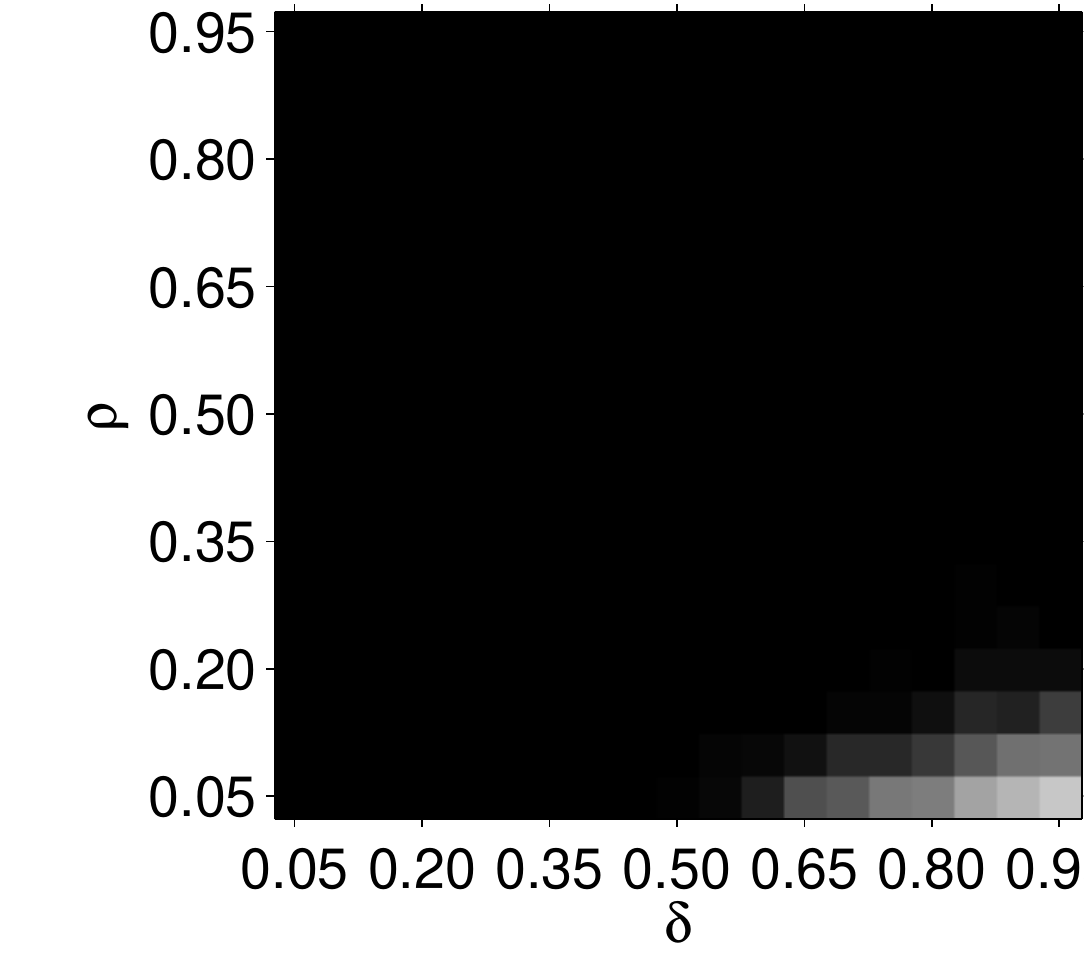}
		\label{fig:OMPA}
	}
  \subfloat[ABS: OMP-$\epsilon$]
  {
		\includegraphics[scale=0.3]{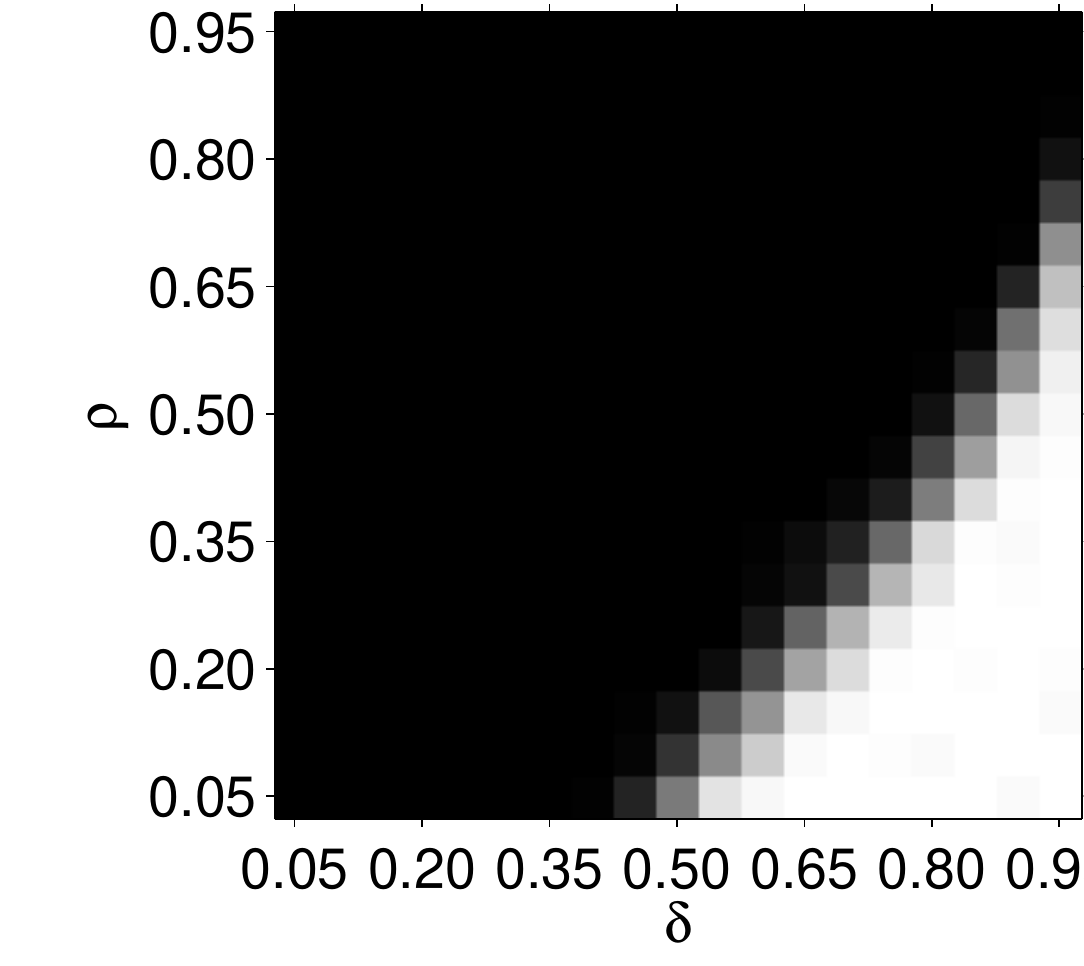}
		\label{fig:OMPepsA}
	}
  \subfloat[ABS: TST]
  {
		\includegraphics[scale=0.3]{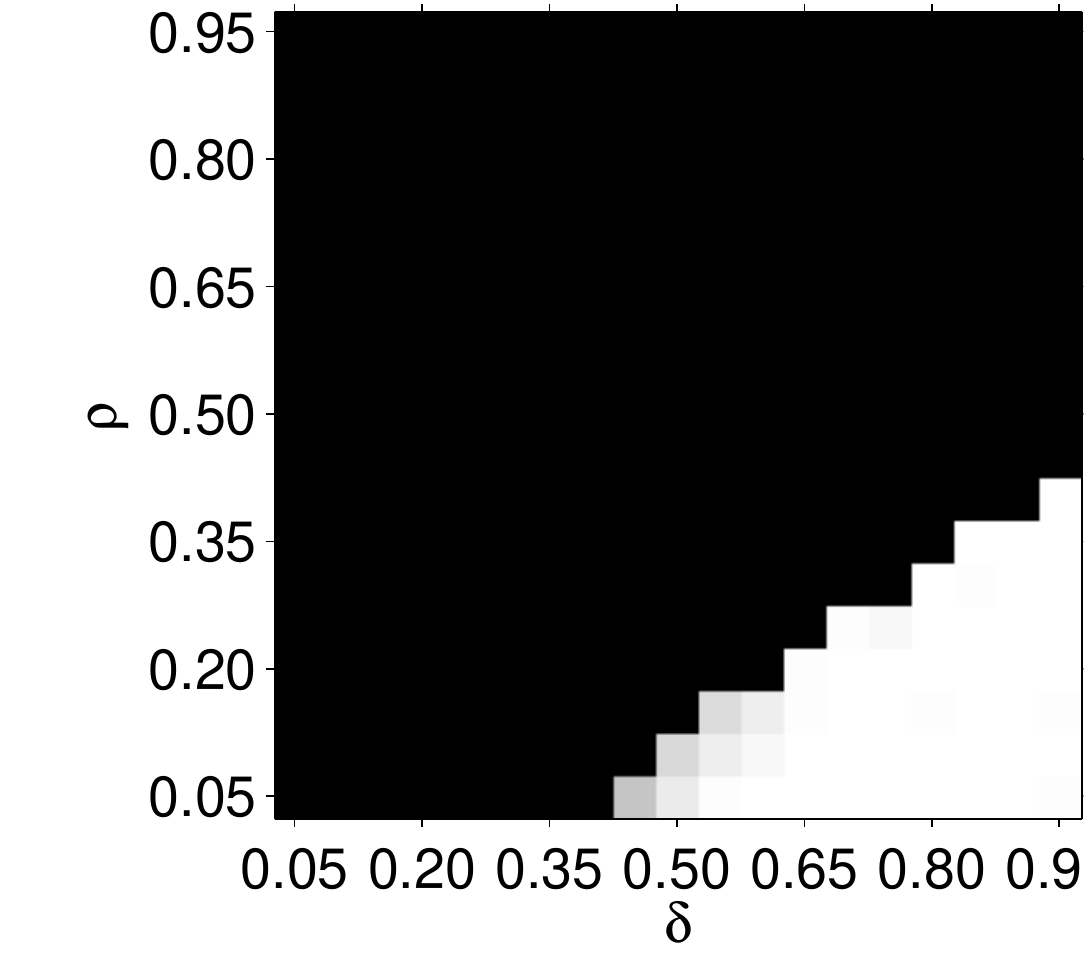}
		\label{fig:TSTA}
	}
  \subfloat[ABS: BP]
  {
		\includegraphics[scale=0.3]{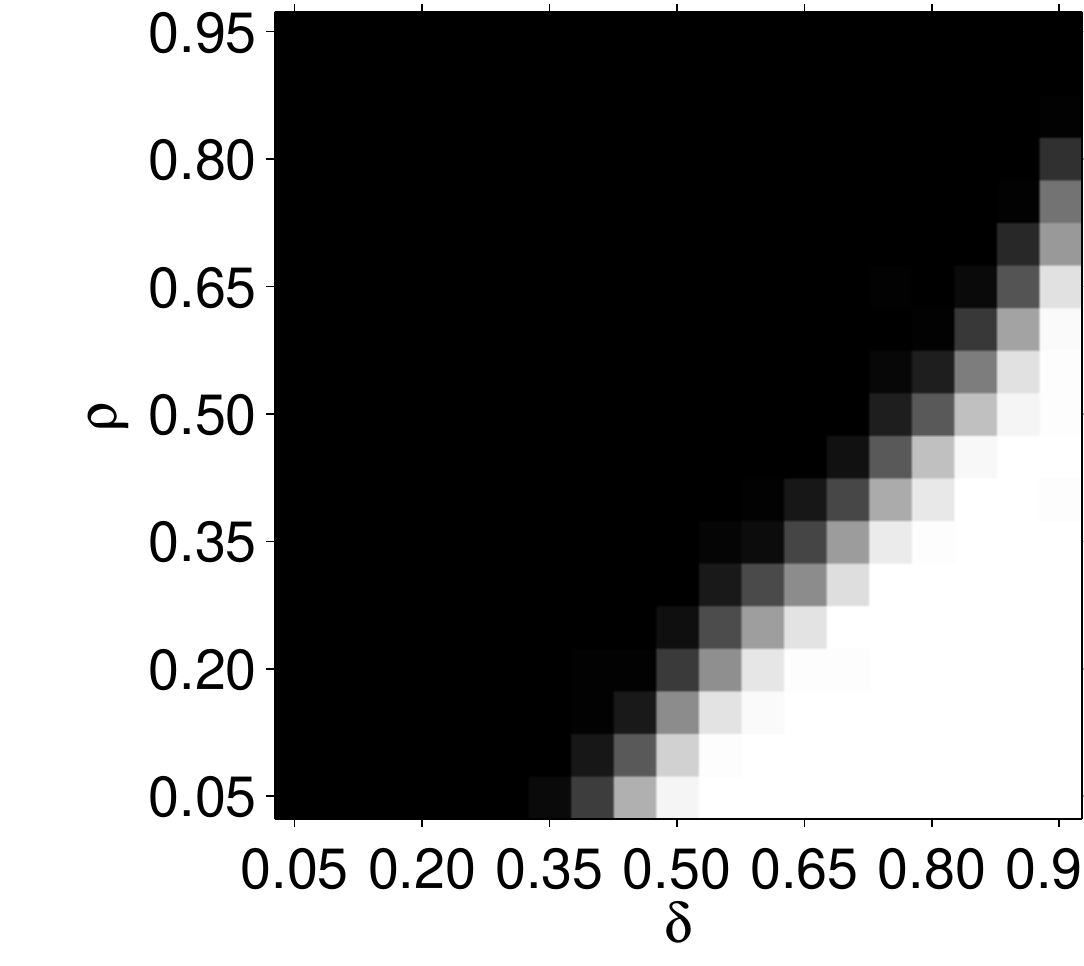}
		\label{fig:BPA}
	}
  \subfloat[GAP]
  {
		\includegraphics[scale=0.3]{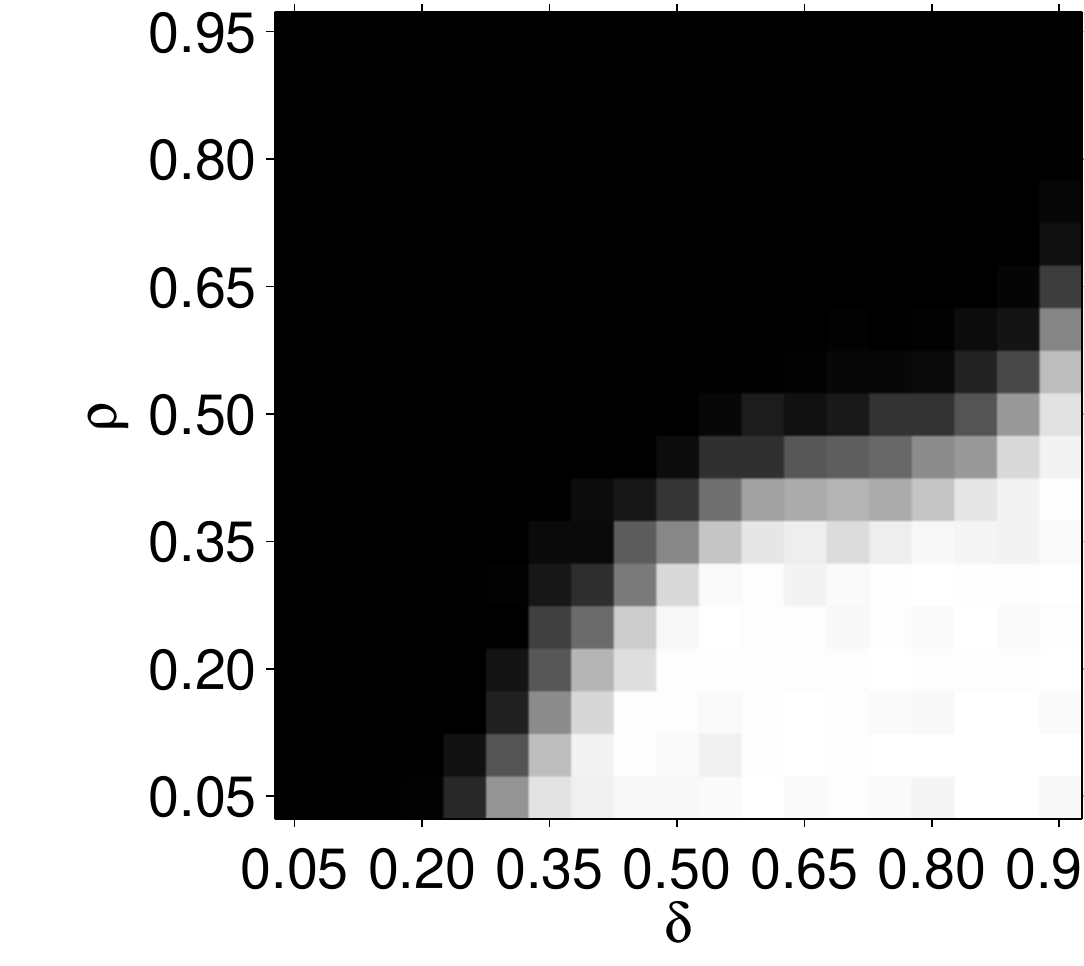}
		\label{fig:GAP}
	}
  \hfill
\caption{Percentage of perfectly reconstructed signals for analysis-based recovery with different algorithms: our proposed ABS approach with four different solvers ( (a), (b), (c) and (d) ) and the GAP algorithm \cite{Nam2011ICASSP} (e). White indicates 100\% recoverability and black 0\%.}
\label{fig:results}
\end{figure*}

The results show that our approach is a viable solution to analysis-based recovery. However, we find that not all synthesis solvers are adequate for use with our approach: OMP-$k$ performs poorly, suggesting that this should not be considered as an option for recovery. OMP-$\epsilon$, TST and BP provide good results, with OMP-$\epsilon$ and BP outperforming GAP in some areas (lower cosparsity but sufficient measurements, i.e. larger $\delta$ and larger $\rho$), but being outperformed in others (fewer measurements and higher cosparsity, i.e. smaller $\delta$ and $\rho$).

For completeness, we also present the total running times of the algorithms in Table \ref{table_times}. The overall experiment consisted in recovering a total of 36100 signals ( $19 \times 19$ pairs $(\delta, \rho)$ $\times 100$ signals for each) on a $2.83$GHz Intel Core 2 Quad Q9550 machine running MATLAB 7.7.0. We find that for our experiments OMP recovery is the fastest whereas BP is the slowest, with TST and GAP yielding intermediate times. 

\section{Discussion and further considerations}
\label{sec:disc}

As we have seen, the proposed approach is based on rewriting analysis recovery as a particular case of the more general synthesis recovery problem, subsequently applying a general synthesis-based solver. Therefore the solver may not fully exploit the particularities of the analysis problem, reflected in the particular structure of the augmented constraint matrix $\tilde{A}$ (i.e. the bottom rows are orthogonal to the upper rows). This makes it possible for the results not to be as good as with a solver designed exclusively for analysis-based reconstruction. For the purpose of this paper, however, we settle with the possible slight suboptimality of the synthesis solvers, counterbalanced by the increased flexibility conferred by the large number of available solvers.

For reconstruction with approximate constraints, i.e. $\epsilon \ge 0$ in \eqref{eq_CSrec_a}, an extra precaution is required when handling the augmented constraint matrix in \eqref{eq_extsys}. We still require that the solution $\gamma$ satisfies the lower subspace constraint as precisely as possible, but we allow a certain degree of approximation error for the upper part. This prevents a direct application of Theorem \ref{th_equiv}. We are currently working on establishing a similar equivalence relation for the case of approximate recovery. 

\begin{table}
\caption{Total running times ($\times 10^3$ seconds)}
  \begin{center}
    \begin{tabular}{ccccc}
      \toprule
        \multicolumn{4}{c}{ABS:} \\
	    OMP-$k$ & OMP-$\epsilon$ & TST & BP & GAP \\
      \cmidrule(lr){1-4} \cmidrule(lr){5-5}
	    7.691 & 8.004 & 22.829 & 51.152 & 13.065 \\
      \bottomrule
    \end{tabular}
  \end{center}
\label{table_times} 
\end{table}


\section{Conclusions and future work}
\label{sec:concl}

In this paper we have presented a new approach to analysis-based exact signal recovery, by reformulating it as a particular synthesis-based problem. We prove that, for reconstruction with exact constraints, analysis recovery is equivalent to synthesis recovery with an augmented constraint matrix. This means that we can use synthesis-based algorithms for analysis recovery. Experimental results show that our approach is a viable alternative for analysis-based reconstruction.

For future work, it will be interesting to investigate which algorithms are suitable for this approach and the reason why some, such as OMP-$k$, are performing poorly, while others provide good results. We also aim to extend the equivalence for approximate recovery.

\bibliographystyle{IEEEtran}
\bibliography{IEEEabrv,JabRef_short}

\end{document}